\title{On Buckingham's $\Pi$-Theorem}
\author{Jan-David Hardtke}
\date{}
\providecommand{\sm}{\setminus}
\providecommand{\ssq}{\subseteq}
\providecommand{\N}{\ensuremath{\mathbb{N}}}
\providecommand{\R}{\ensuremath{\mathbb{R}}}
\providecommand{\Ph}{\ensuremath{\mathcal{P}}}
\providecommand{\eps}{\ensuremath{\varepsilon}}
\providecommand{\Dim}[1]{\langle #1 \rangle}
\providecommand{\one}{\ensuremath{\underline{1}}}
\providecommand{\Matrix}[2][1]{\left(\begin{array}{*{#1}{c}} #2 \end{array}\right)}
\providecommand{\keywords}[1]{
{\let\thefootnote=\relax
\footnote{{\em Keywords}: #1}}
\addtocounter{footnote}{-1}
}
\providecommand{\AMS}[1]{
{\let\thefootnote=\relax
\footnote{{\em AMS Subject Classification} (2010): #1}}
\addtocounter{footnote}{-1}
}
\providecommand{\address}{
{\sc \noindent Department of Mathematics \\
Universit\"at Leipzig\\
Augustusplatz 10, 04109 Leipzig \\
Germany \\}
}
\DeclarePairedDelimiter{\set}{\lbrace}{\rbrace}
\DeclarePairedDelimiter{\paren}{\lparen}{\rparen}
\DeclarePairedDelimiter{\abs}{\lvert}{\rvert}
\DeclarePairedDelimiter{\norm}{\lVert}{\rVert}
\theoremstyle{definition}
\newtheorem{definition}{Definition}[section]
\newtheorem*{definition*}{Definition}
\theoremstyle{plain}
\newtheorem{lemma}[definition]{Lemma}
\newtheorem*{lemma*}{Lemma}
\newtheorem{theorem}[definition]{Theorem}
\newtheorem*{theorem*}{Theorem}
\newenvironment{Abstract}{\centering\begin{minipage}{0.8\textwidth} \noindent \small {\sc Abstract.}}{\end{minipage}\par}
\definecolor{darkgreen}{rgb}{0,0.5,0}
\numberwithin{equation}{section}
\providecommand{\email}{{\it E-mail address:} \href{mailto:hardtke@math.uni-leipzig.de}{\tt hardtke@math.uni-leipzig.de}}
\begin{document}
	
\maketitle	
	
\begin{Abstract}
Roughly speaking, Buckingham's $\Pi$-Theorem provides a method to ``guess'' the structure of 
physical formulas simply by studying the dimensions (the physical units) of the involved quantities. Here we will prove a quantitative version of Buckingham's Theorem, which is 
``purely mathematical'' in the sense that it does make any explicit reference to physical 
units.
\end{Abstract}
\keywords{physical units; Buckingham's theorem; Moore-Penrose pseudoinverse}
\AMS{46N99 15A99 15A09}

\section{Introduction: Physical Units}\label{sec:intro}
Let us begin with a brief discussion on physical units. Every physical quantity consists of a numerical value and a unit of measure. The most commonly used  system of units is the SI (syst\`eme internationale (d'unit\'es)). It uses the following seven base units.\\
{	
	\renewcommand{\arraystretch}{1.3}
\begin{center}
{\bf Table 1: SI base units}\\
\ \\
\begin{tabular}{|c|c|}
\hline
{\bf Quantity} \ \ & {\bf Unit} \\ \hline
length \ \ & meter (\si{m}) \\ \hline
mass \ \ & kilogram (\si{kg}) \\ \hline
time \ \ & second (\si{s}) \\ \hline
electric current \ \ & ampere (\si{A}) \\ \hline
temperature \ \ & kelvin (\si{K}) \\ \hline
amount of substance \ \ & mole (\si{mol}) \\ \hline
luminous intensity \ \ & candela (\si{cd}) \\ \hline	
\end{tabular}
\end{center}
	}
\ \\
\indent All other units are of the form $\si{m^{\alpha_1}.kg^{\alpha_2}.s^{\alpha_3}.A^{\alpha_4} .K^{\alpha_5}.mol^{\alpha_6}.cd^{\alpha_7}}$ with rational exponents $\alpha_1,\dots,\alpha_7$.
A table of some of the most important derived physical quantities and their SI units 
is included in the Appendix (Table 2).\par
A quantity's unit (or more properly the corresponding tuple of exponents) is called its dimension. If all the exponents are zero, the quantity is called dimensionless (i.\,e. it is just a number without a unit).\par 
When two physical quantities, say $p$ and $q$ with dimensions $(\alpha_1,\dots,\alpha_7)$ and $(\beta_1,\dots,\beta_7)$, are multiplied, their units are multiplied accordingly, i.\,e. the dimension of $pq$ is $(\alpha_1+\beta_1,\dots,\alpha_7+\beta_7)$.\par
The sum $p+q$ is only defined if $p$ and $q$ have the same dimension (which is then also the dimension of $p+q$).\par
In addition, we may also use scaling factors together with the units. When taking powers of units these factors have to be scaled accordingly, for instance $1\,\si{cm}=10^{-2}\,\si{m}$ 
and $1\,\si{cm^3}=(10^{-2})^3\,\si{m^3}=10^{-6}\,\si{m^3}$.\par 
Table 3 in the Appendix shows the standard scaling factors and their prefixes used in the SI. Other scaling factors may occur when converting non-SI to SI units. A list of some non-SI units which are still frequently used (at least in some specific areas) can be found in Table 4 in the Appendix.\par 
When confronted with a physical formula, usually the first thing one does is to check whether
it is consistent, i.\,e. whether the physical dimensions on both sides of the equation agree (otherwise it cannot be true).\par 
Of course, consistency of the dimensions is not sufficient to ensure that a given formula holds true. Nonetheless, in many situations a closer look at the dimensions of the involved quantities may give us a hint as to how a sought formula might look like. Let us consider the following classical example: we have a mathematical pendulum, that is, a small mass $m$ attached to a string of length $l$ suspended from the ceiling, so that the mass $m$ can swing freely back and forth under the influence of gravity $g$.\par 
For small enough elongations, the mass $m$ performs a periodic motion with period $T$. We wish to express $T$ as a function of the quantities $m$, $l$ and $g$. They have the SI units $\si{kg}$, $\si{m}$ and $\si{m/s^2}$, while $T$ has the unit $\si{s}$. It seems reasonable to assume that the unit of $T$ should come out as some combination of the units of $m$, $l$ and $g$. Thus we try to find rational numbers $y_1,y_2,y_3$ such that 
\begin{equation*}
\si{s}=\si{kg^{y_1}.m^{y_2}.(m.s^{-2})^{y_3}}=\si{kg^{y_1}.m^{y_2+y_3}.s^{-2y_3}}.
\end{equation*} 
\indent Comparing the exponents on both sides leads to $y_1=0$, $y_2+y_3=0$ and $-2y_3=1$, hence $y_3=-1/2$ and $y_2=1/2$.\par
Thus we may conjecture that the formula for $T$ is of the form $m^{y_1}l^{y_2}g^{y_3}=\sqrt{l/g}$, possibly times some dimensionless factor $C$:
\begin{equation*}
T=C\sqrt{\frac{l}{g}}.
\end{equation*}
\indent Surprisingly, for $C=2\pi$ this gives indeed the correct formula. The general principle behind this reasoning is what is known as Buckingham's $\Pi$-Theorem. To formulate it, we consider not only the SI with its seven base units but the more general case of $m$ base units $U_1,\dots,U_m$. The unit of a physical quantity $p$ is then of the form $U_1^{\alpha_1}\dots U_m^{\alpha_m}$, where we may even allow the exponents to be irrational. The tuple
$(\alpha_1,\dots,\alpha_m)$ is again called the dimension of $p$.\par 
Now suppose we have a physical quantity $q$ with dimension $\beta=(\beta_1,\dots,\beta_m)$. We assume that $q$ depends on the physical quantities $p_1,\dots,p_n$, i.\,e. we assume that $q=F(p_1,\dots,p_n)$ for a suitable function $F$ which
we would like determine. We denote the dimension of $p_j$ by $\alpha_j=(\alpha_{1j},\dots,\alpha_{mj})$ and consider the $m\times n$-matrix $A$ whose $j$-th column is $\alpha_j^T$. The unit of the output $q$ should be a combination of the units of the inputs $p_1,\dots,p_n$, so we assume that there are $y_1,\dots,y_n\in \R$ such that
\begin{equation*}
U_1^{\beta_1}\dots U_m^{\beta_m}=(U_1^{\alpha_{11}}\dots U_m^{\alpha_{m1}})^{y_1}\dots (U_1^{\alpha_{1n}}\dots U_m^{\alpha_{mn}})^{y_n}, 
\end{equation*}
which is equivalent to $Ay^T=\beta^T$, where $y:=(y_1,\dots,y_n)$.\par 
Moreover, the function $F$ should also have the correct scaling behavior. If we scale the units of $p_1,\dots,p_n$, then the unit of the output $q=F(p_1,\dots,p_n)$ should be scaled accordingly, i.\,e. for all $c_1,\dots,c_m>0$ and all $v_1,\dots,v_n$ we have 
\begin{align*}
&F(v_1(c_1U_1)^{\alpha_{11}}\dots (c_mU_m)^{\alpha_{m1}},\dots,v_n(c_1U_1)^{\alpha_{1n}}\dots (c_mU_m)^{\alpha_{mn}})\\
&=F(v_1c_1^{\alpha_{11}}\dots c_m^{\alpha_{m1}}U_1^{\alpha_{11}}\dots U_m^{\alpha_{m1}},\dots,
v_nc_1^{\alpha_{1n}}\dots c_m^{\alpha_{mn}}U_1^{\alpha_{1n}}\dots U_m^{\alpha_{mn}})\\
&=F(v_1U_1^{\alpha_{11}}\dots U_m^{\alpha_{m1}},\dots,v_nU_1^{\alpha_{1n}}\dots U_m^{\alpha_{mn}})c_1^{\beta_1}\dots c_m^{\beta_m}.
\end{align*}
\indent Now we distinguish two cases. First we assume that the rank of $A$ is equal to $n$. Then Buckingham's Theorem states that $F$ is of the form 
\begin{equation*}
F(p_1,\dots,p_n)=Cp_1^{y_1}\dots p_n^{y_n}
\end{equation*}
for a dimensionless constant $C$. This is what we got in the previous example for the period $T$
(here $\alpha_1=(0,1,0)$, $\alpha_2=(1,0,0)$, $\alpha_3=(1,0,-2)$ and $\beta=(0,0,1)$).\par 
The second case, $\mathrm{rank}(A)<n$, is a bit more complicated. Buckingham's $\Pi$-Theorem states that there are $k=n-\mathrm{rank}(A)$ dimensionless quantities $\pi_1,\dots,\pi_k$ and a suitable function $G$ such that 
\begin{equation*}
F(p_1,\dots,p_n)=G(\pi_1,\dots,\pi_k)p_1^{y_1}\dots p_n^{y_n}.
\end{equation*}
\indent The $\pi_s$ can be determined as follows: choose a basis $(x_1^T,\dots,x_k^T)$ of the kernel of $A$ and denote the $j$-th coordinate of $x_s$ by $x_{sj}$. Then $\pi_s=p_1^{x_{s1}}\dots p_n^{x_{sn}}$ for $s=1,\dots,k$ (this is not unique, a different choice
for the $x_s$ may lead to a different set of quantities $\pi_1,\dots,\pi_k$ and hence also to a different $G$).\par
As an example let us consider a  classical Atwood machine consisting of two masses $m_1$ and $m_2$ connected by an inextensible, massless string over a massless, frictionless pulley. In the beginning, both masses are assumed to be at rest at the same height $h$. Then the larger mass drops to the ground while the smaller mass is pulled up to the height $2h$. We want to find the velocity $v$ of the larger mass at the moment it hits the ground. This velocity should be a function of $m_1$, $m_2$, $h$ and the gravity $g$, so $v=F(m_1,m_2,h,g)$.\par 
The matrix $A$ whose columns are given by the dimensions of $m_1$, $m_2$, $h$ and $g$ is 
\begin{equation*}
A=\Matrix[4]{0 & 0 & 1 & 1\\ 1 & 1 & 0 & 0 \\ 0 & 0 & 0 & -2},
\end{equation*}
while the dimension of $v$ is $\beta=(1,0,-1)$.\par
Obviously $\mathrm{rank}(A)=3$, the kernel of $A$ is spanned by $x^T=(1,-1,0,0)^T$ and for $y=(0,0,1/2,1/2)$ we have $Ay^T=\beta^T$.\par
So if we assume that the function $F$ has the correct scaling behavior as described above, then 
Buckingham's Theorem implies that there is a function $G$ of one variable such that 
\begin{equation*}
v=F(m_1,m_2,h,g)=G(\pi_1)m_1^{y_1}m_2^{y_2}h^{y_3}g^{y_4}=G(\pi_1)\sqrt{gh},
\end{equation*}
where the dimensionless quantity $\pi_1$ is given by $\pi_1=m_1^{x_1}m_2^{x_2}h^{x_3}g^{x_4}=m_1/m_2$.\par 
In summary we have $v=G(m_1/m_2)\sqrt{gh}$ for an unknown function $G$. While this formula is of course not a complete solution to our problem, it still gives us some valuable information:
we see how $v$ depends on $g$ and $h$ if $m_1$ and $m_2$ are fixed. Moreover, $v$ does not directly depend on the masses $m_1$ and $m_2$ themselves, only on their ratio $m_1/m_2$.\par
Using for instance the law of conversation of energy, one can derive the exact formula
\begin{equation*}
v=\sqrt{2gh\frac{|m_1-m_2|}{m_1+m_2}}=G(m_1/m_2)\sqrt{gh},
\end{equation*}
where
\begin{equation*}
G(z)=\sqrt{2\frac{|z-1|}{z+1}}.
\end{equation*}
\par 
Alternatively, one could also consider the forces acting on $m_1$ and $m_2$ and employ Newton's second law of motion to get the same result.\par
Now some historical comments are in order. Buckingham's Theorem is named for Edgar Buckingham, who described it in his work \cite{buckingham} in 1914. But essentially the same result appeared already in 1892 in an article by A. Vaschy \cite{vaschy} and also in 1911 in works 
of A. Federman \cite{federman} and D. Riabouchinsky \cite{riabouchinsky}. It was Buckingham who introduced the symbol $\pi$ for the dimensionless quantities, which let to the name $\Pi$-Theorem. For some modern works on the subject, see for instance \cites{boyling,curtis,hanche-olsen}.\par 
So far our discussion of Buckingham's Theorem has not been quite rigorous from a mathematical point of view, largely due to the fact that the physical units themselves are not properly defined mathematical objects. Now we will define a purely mathematical formalism to model the idea of physical quantities without explicit reference to physical units (of course this is known material, it is included here only for the reader's convenience).

\section{Mathematical Formalism}\label{sec:math}
First we introduce some notation. We set $\R_+:=(0,\infty)$ and for a fixed natural number $m$, we denote by $\R^m$ the space of all column vectors of length $m$ with real entries, while $\R_m$ denotes the corresponding space of row vectors. The symbols $\R_+^m$ resp. $\R_m^+$ denote the set of all elements of $\R^m$ resp. $\R_m$ with positive entries.\par
For $c=(c_1,\dots,c_m)\in \R_m^+$ and $\alpha=(\alpha_1,\dots,\alpha_m)\in \R_m$ we put $c^{\alpha}:=\prod_{i=1}^mc_i^{\alpha_i}$.\par
Now we can formulate the following definition.
\begin{definition}\label{def:phys}
We define a relation $\sim$ on $\R\times \R_m^+\times \R_m$ in the following way:
\begin{equation*}
(x,c,\alpha)\sim (y,d,\beta) \ \ :\Leftrightarrow \ \ \alpha=\beta \ \text{and}\ xc^{\alpha}=yd^{\alpha},
\end{equation*}
where $x,y\in \R$, $c, d\in \R_m^+$ and $\alpha, \beta\in \R_m$.
\end{definition}

Obviously $\sim$ is an equivalence relation. We write $[x,c,\alpha]$ for the equivalence class of $(x,c,\alpha)$. 
The set of all equivalence classes is denoted by 
\begin{equation*}
\Ph:=\set*{[x,c,\alpha]:(x,c,\alpha)\in \R\times \R_m^+\times \R_m}
\end{equation*}
and for each $\alpha\in \R_m$ we set 
\begin{equation*}
\Ph_{\alpha}:=\set*{[x,c,\alpha]:x\in \R, c\in \R_m^+}.
\end{equation*}
Then we have
\begin{equation*}
\Ph=\biguplus_{\alpha\in \R_m}\Ph_{\alpha}.
\end{equation*}

Now suppose that $p\in \Ph_{\alpha}$. Then we call $\alpha$ the {\it dimension} of $p$, denoted by $\Dim{p}=\alpha$ (if $\Dim{p}=0$, then $p$ is called {\it dimensionless}). \par 
For every $c\in \R_m^+$ there exists a unique $x\in \R$ such that $p=[x,c,\alpha]$, which we denote by  $[p]_c=x$. Also, we put $[p]:=[p]_{\one}$, where $\one:=(1,\dots,1)$.\par
Obviously, if $c=(c_1,\dots,c_m),d=(d_1,\dots,d_m)\in \R_m^+$, then 
\begin{equation*}
[p]_d=[p]_c(c/d)^{\alpha},
\end{equation*}
where $c/d:=(c_1/d_1,\dots,c_m/d_m)$.\par 
The elements of $\Ph_{\alpha}$ can indeed be interpreted as physical quantities of dimension $\alpha$ with respect to some fixed set of $m$ base units. $[p]_c$ can be understood as the numerical value of the quantity $p$ when the base units are scaled with the factors $c_1,\dots,c_m$.\par 
Next we define a multiplication on $\Ph$ which models the multipilcation of physical quantities.
\begin{definition}\label{def:mult}
For all $p_1,p_2\in \Ph$ we set 
\begin{equation*}
p_1p_2:=[[p_1][p_2],\one,\Dim{p_1}+\Dim{p_2}].
\end{equation*}
\end{definition}

The following assertions are easily proved.
\begin{lemma}\label{lemma:mult}
For all $p_1,p_2\in \Ph$ we have $\Dim{p_1p_2}=\Dim{p_1}+\Dim{p_2}$ and 
$[p_1p_2]_c=[p_1]_c[p_2]_c$ for every $c\in \R_m^+$.\par
Moreover, $(\Ph,\cdot)$ is a commutative monoid with neutral element $[1,\one,0]$
and $p\in \Ph$ is invertible if and only if $[p]\neq 0$. In that case,
$\Dim{p^{-1}}=-\Dim{p}$ and $[p^{-1}]_c=[p]_c^{-1}$ for every $c\in \R_m^+$.
\end{lemma}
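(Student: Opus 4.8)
The plan is to verify each assertion by direct computation from Definitions \ref{def:phys} and \ref{def:mult}, using the rescaling identity $[p]_d=[p]_c(c/d)^{\alpha}$ recorded above together with the elementary fact that $d^{\alpha+\beta}=d^{\alpha}d^{\beta}$ for all $d\in\R_m^+$ and $\alpha,\beta\in\R_m$ (which itself follows at once from the definition of $c^{\alpha}$).

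First I would establish the two product formulas. That $\Dim{p_1p_2}=\Dim{p_1}+\Dim{p_2}$ is immediate from Definition \ref{def:mult}. For the numerical values, note that $p_1p_2=[[p_1][p_2],\one,\Dim{p_1}+\Dim{p_2}]$ gives $[p_1p_2]=[p_1p_2]_{\one}=[p_1][p_2]$; applying the rescaling identity with trivial scaling on one side then yields $[p_1p_2]_c=[p_1][p_2](\one/c)^{\Dim{p_1}+\Dim{p_2}}$, whereas $[p_i]_c=[p_i](\one/c)^{\Dim{p_i}}$ for $i=1,2$, so the claim $[p_1p_2]_c=[p_1]_c[p_2]_c$ drops out of $d^{\alpha+\beta}=d^{\alpha}d^{\beta}$.

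Next come the monoid axioms. Commutativity holds because both $[p_1][p_2]$ and $\Dim{p_1}+\Dim{p_2}$ are symmetric in $p_1,p_2$. For associativity, using $[p_1p_2]=[p_1][p_2]$ and $\Dim{p_1p_2}=\Dim{p_1}+\Dim{p_2}$ one checks that $(p_1p_2)p_3$ and $p_1(p_2p_3)$ both equal $[[p_1][p_2][p_3],\one,\Dim{p_1}+\Dim{p_2}+\Dim{p_3}]$. For the neutral element $e:=[1,\one,0]$, the key observation is that every $p\in\Ph_{\alpha}$ satisfies $p=[[p],\one,\alpha]$ (since $[p]=[p]_{\one}$), whence $pe=[[p]\cdot 1,\one,\alpha+0]=p$, and $ep=p$ by commutativity.

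Finally, for invertibility: if $[p]\neq 0$, put $q:=[[p]^{-1},\one,-\Dim{p}]$; then $pq=[[p][p]^{-1},\one,\Dim{p}-\Dim{p}]=[1,\one,0]=e$, so $p^{-1}=q$ exists and $\Dim{p^{-1}}=-\Dim{p}$. Conversely, if $pq=e$ for some $q$, then $[p][q]=[pq]=[e]=1$, forcing $[p]\neq 0$. The identity $[p^{-1}]_c=[p]_c^{-1}$ then follows from the product formula applied to $pp^{-1}=e$, once one notes that $[e]_c=1$ for every $c$ (again by the rescaling identity, as $\Dim{e}=0$) and that $[p]_c\neq 0$ because $[p]_c=[p](\one/c)^{\Dim{p}}$. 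None of these steps is a genuine obstacle; the only point that calls for a little care is consistently distinguishing a quantity's value $[p]_c$ at an arbitrary scaling from its value $[p]=[p]_{\one}$ at the trivial one, since the multiplication in Definition \ref{def:mult} is phrased in terms of the latter.
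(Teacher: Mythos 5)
Your verification is correct and complete; the paper itself offers no proof (it merely remarks that the assertions are "easily proved"), and your argument is exactly the routine check from Definitions \ref{def:phys} and \ref{def:mult} together with the rescaling identity $[p]_c=[p](\one/c)^{\Dim{p}}$ that the author intends the reader to supply. In particular your derivations of $[p_1p_2]_c=[p_1]_c[p_2]_c$ and $[p^{-1}]_c=[p]_c^{-1}$ from the product formula at the trivial scaling are sound, so there is nothing to add.
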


We can also define an operation $+$ on each set $\Ph_{\alpha}$, mimicking again the summation of
physical quantities.
\begin{definition}\label{def:sum}
Let $\alpha\in \R_m$. For $p_1,p_2\in \Ph_{\alpha}$ we define
\begin{equation*}
p_1+p_2:=[[p_1]+[p_2],\one,\alpha].
\end{equation*}
\end{definition}

Then the following lemma holds (obviously).
\begin{lemma}\label{lemma:sum}
Let $\alpha\in \R_m$. Then $(\Ph_{\alpha},+)$ is an abelian group and for each $c\in \R_m^+$
and all $p_1,p_2\in \Ph_{\alpha}$ we have $[p_1+p_2]_c=[p_1]_c+[p_2]_c$.
\end{lemma}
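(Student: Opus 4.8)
The plan is to observe that, for a fixed $\alpha\in\R_m$, the evaluation map $\Phi\colon\Ph_{\alpha}\to\R$, $\Phi(p):=[p]$, is a bijection with inverse $x\mapsto[x,\one,\alpha]$, and that Definition~\ref{def:sum} simply transports the additive structure of $\R$ along $\Phi$. First one should note that $+$ is well defined on $\Ph_{\alpha}$: for $p_1,p_2\in\Ph_{\alpha}$ the numbers $[p_1],[p_2]$ are uniquely determined, so the right-hand side of Definition~\ref{def:sum} is an unambiguous element of $\Ph_{\alpha}$. Next, reading off Definition~\ref{def:sum} at $\one$ gives $[p_1+p_2]=[p_1]+[p_2]$, i.e.\ $\Phi$ is a homomorphism from $(\Ph_{\alpha},+)$ onto $(\R,+)$; being also bijective, it is a group isomorphism. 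Hence associativity and commutativity are inherited from $(\R,+)$, the neutral element is $[0,\one,\alpha]$, and the inverse of $p$ is $-p=[-[p],\one,\alpha]$; in particular $(\Ph_{\alpha},+)$ is abelian. This is entirely parallel to the proof of Lemma~\ref{lemma:mult}.

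For the scaling identity I would invoke the formula $[p]_d=[p]_c(c/d)^{\alpha}$ recorded just before Definition~\ref{def:mult}. Taking $c=\one$ yields $[p]_d=[p](\one/d)^{\alpha}$ for every $p\in\Ph_{\alpha}$ and every $d\in\R_m^+$. Applying this to $p_1+p_2$, $p_1$ and $p_2$ in turn and using $[p_1+p_2]=[p_1]+[p_2]$ from the previous step, we obtain
\[
[p_1+p_2]_d=([p_1]+[p_2])(\one/d)^{\alpha}=[p_1](\one/d)^{\alpha}+[p_2](\one/d)^{\alpha}=[p_1]_d+[p_2]_d,
\]
which is the claimed identity (with $d$ in place of $c$).

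There is no real obstacle here: the statement is a direct unravelling of the definitions. The only points deserving a moment's attention are the well-definedness of $+$ and the bookkeeping with the evaluation maps $[\,\cdot\,]_c$, both of which are taken care of by the scaling formula above.
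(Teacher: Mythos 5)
Your proof is correct; the paper itself gives no argument for this lemma (it is stated as holding ``obviously''), and your direct verification --- transporting the group structure of $(\R,+)$ along the bijection $p\mapsto[p]$ and then applying the scaling formula $[p]_d=[p]_c(c/d)^{\alpha}$ --- is exactly the intended routine unravelling of the definitions.
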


Next we also define positivity in $\Ph$.
\begin{definition}\label{def:ordering}
We say that $p\in \Ph$ is positive ($p>0$) if $[p]>0$. For every $\alpha\in \R^m$ we put 
\begin{equation*}
\Ph_{\alpha}^+:=\set*{p\in \Ph_{\alpha}:p>0}.
\end{equation*}
\end{definition}

Obviously, $p>0$ if and only if $[p]_c>0$ for some $c\in \R_+^m$ if and only if $[p]_c>0$ for
every $c\in \R_+^m$.\par  
Now we define the class of functions which have the correct scaling behavior for Buckingham's Theorem. Since we want to prove a quantitative version of this theorem, we will also consider 
functions which have ``almost'' the correct scaling behavior (up to $\eps$).
\begin{definition}\label{def:functions}
Given $n,m\in \N$, $\eps\geq 0$, $\beta\in \R_m$ and a real $m\times n$ matrix $A$ with columns $\alpha_1^T,\dots,\alpha_n^T$, we denote by $S(A,\beta,\eps)$ the set of all functions 
$F:\R_n^+ \rightarrow \R$ which satisfy
\begin{equation*}
|F(v_1c^{\alpha_1},\dots,v_nc^{\alpha_n})-F(v_1,\dots,v_n)c^{\beta}|\leq\eps|F(v_1,\dots,v_n)|c^{\beta}
\end{equation*}
for all $v_1,\dots,v_n\in \R_+$ and every $c\in \R_m^+$. For $S(A,\beta,0)$ we simply write $S(A,\beta)$.
\end{definition}

The elements of $S(A,\beta)$ (the usual class of functions considered in Buckingham's Theorem)
are exactly those for which the scaling of physical units is consistent. Precisely, this means the following: if $F:\R_n^+ \rightarrow \R$ is given, define a map 
$\widehat{F}:\Ph_{\alpha_1}^+\times\dots\times\Ph_{\alpha_n}^+ \rightarrow \Ph_{\beta}$ by 
\begin{equation*}
\widehat{F}(p_1,\dots,p_n):=[F([p_1],\dots,[p_n]),\one,\beta] \ \ \ \forall (p_1,\dots,p_n)\in \Ph_{\alpha_1}^+\times\dots\times\Ph_{\alpha_n}^+
\end{equation*}
and then it is easy to see that the following assertions are equivalent:
\begin{enumerate}[(i)]
\item $F\in S(A,\beta)$
\item For all $(p_1,\dots,p_n)\in \Ph_{\alpha_1}^+\times\dots\times\Ph_{\alpha_n}^+$ and all 
$c\in \R_m^+$ we have 
\begin{equation*}
\widehat{F}(p_1,\dots,p_n)=[F([p_1]_c,\dots,[p_n]_c),c,\beta].
\end{equation*}
\end{enumerate}
\par 
In the next section we shall need the following lemma (it is of course well-known, but we include a proof here for the reader's convenience).
\begin{lemma}\label{lemma:buckingham}
Let $A$ be a real $m\times n$ matrix, $\beta\in \R_m$ and $y\in \R_n$ such that $Ay^T=\beta^T$.	
Suppose that $G\in S(A,0)$ and put $F(v):=G(v)v^y$ for all $v\in \R_n^+$. Then $F\in S(A,\beta)$.
\end{lemma}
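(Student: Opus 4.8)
The plan is a direct verification from the definitions; no nontrivial input is needed beyond the hypothesis $Ay^T=\beta^T$ and the fact that $G\in S(A,0)$ just says that $G$ is invariant under scaling along the columns of $A$.

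First I would fix $v=(v_1,\dots,v_n)\in\R_n^+$ and $c\in\R_m^+$, write $\alpha_1^T,\dots,\alpha_n^T$ for the columns of $A$, and abbreviate $w_j:=v_jc^{\alpha_j}$. Unravelling the definition of $F$,
\[
F(w_1,\dots,w_n)=G(w_1,\dots,w_n)\,\prod_{j=1}^n w_j^{y_j}.
\]
The first factor is handled by the hypothesis $G\in S(A,0)$: since the defining inequality of $S(A,0)$ has $\eps=0$ and $c^0=1$, it reads $G(w_1,\dots,w_n)=G(v_1,\dots,v_n)$. For the second factor I would split off the power $v^y$,
\[
\prod_{j=1}^n w_j^{y_j}=\Bigl(\prod_{j=1}^n v_j^{y_j}\Bigr)\prod_{j=1}^n\bigl(c^{\alpha_j}\bigr)^{y_j}=v^y\prod_{j=1}^n\bigl(c^{\alpha_j}\bigr)^{y_j},
\]
then expand $c^{\alpha_j}=\prod_{i=1}^m c_i^{\alpha_{ij}}$ and interchange the two finite products to obtain $\prod_{j=1}^n (c^{\alpha_j})^{y_j}=\prod_{i=1}^m c_i^{\sum_{j=1}^n\alpha_{ij}y_j}$. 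The exponent of $c_i$ appearing here is exactly the $i$-th coordinate of $Ay^T$, which by assumption equals $\beta_i$, so this product equals $c^{\beta}$. Putting the two factors back together gives $F(w_1,\dots,w_n)=G(v)\,v^y c^{\beta}=F(v)\,c^{\beta}$, hence $|F(w_1,\dots,w_n)-F(v)c^{\beta}|=0$, which is precisely the condition $F\in S(A,\beta)$.

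There is no genuine obstacle here; the only point requiring a little care is the index bookkeeping in the identity $\prod_{j=1}^n (c^{\alpha_j})^{y_j}=c^{\beta}$ — in particular, justifying the interchange of the products over $i$ and $j$ and recognizing $\sum_{j=1}^n\alpha_{ij}y_j$ as the $i$-th entry of $Ay^T$. Everything else follows immediately from Definition \ref{def:functions} and the defining relation $F(v)=G(v)v^y$.
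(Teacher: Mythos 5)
Your proposal is correct and follows essentially the same route as the paper's proof: use $G\in S(A,0)$ to replace $G(v_1c^{\alpha_1},\dots,v_nc^{\alpha_n})$ by $G(v)$, then rewrite $\prod_{j=1}^n(c^{\alpha_j})^{y_j}=\prod_{i=1}^m c_i^{\sum_j\alpha_{ij}y_j}=c^{\beta}$ using $Ay^T=\beta^T$. No issues.
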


\begin{proof}
Let $y=(y_1,\dots,y_n)$, $\beta=(\beta_1,\dots,\beta_m)$ and $A=(\alpha_{ij})_{i,j=1}^{m,n}$. Put $\alpha_j:=(\alpha_{1j},\dots,\alpha_{mj})$ for $j=1,\dots,n$.\par 
For all $v=(v_1,\dots.v_n)\in \R_n^+$ and $c=(c_1,\dots.c_m)\in \R_m^+$ we have
\begin{align*}
&F(v_1c^{\alpha_1},\dots,v_nc^{\alpha_n})=G(v_1c^{\alpha_1},\dots,v_nc^{\alpha_n})\prod_{j=1}^n(v_jc^{\alpha_j})^{y_j}\\
&=G(v_1,\dots,v_n)\paren*{\prod_{j=1}^nv_j^{y_j}}\paren*{\prod_{j=1}^nc^{y_j\alpha_j}}=F(v_1,\dots,v_n)\prod_{i=1}^m\prod_{j=1}^nc_i^{\alpha_{ij}y_j}\\
&=F(v_1,\dots,v_n)\prod_{i=1}^mc_i^{\sum_{j=1}^n\alpha_{ij}y_j}=F(v_1,\dots,v_n)\prod_{i=1}^mc_i^{\beta_i}=F(v_1,\dots,v_n)c^{\beta}.
\end{align*}\par 
Thus $F\in S(A,\beta)$.
\end{proof}

We will also need generalized inverses. If $B=(b_{ij})_{i,j=1}^{m,n}$ is a real $m\times n$-matrix, then there exists exactly one real $n\times m$ matrix $B^{\dagger}$ such that the following conditions are satisfied:
\begin{enumerate}[(i)]
\item $BB^{\dagger}B=B$
\item $B^{\dagger}BB^{\dagger}=B^{\dagger}$
\item $BB^{\dagger}$ and $B^{\dagger}B$ are symmetric.	
\end{enumerate}\par
$B^{\dagger}$ is called the Moore-Penrose pseudoinverse of $B$.\par 
If $\mathrm{rank}(B)=m$, then $BB^T$ is invertible and $B^{\dagger}=B^T(BB^T)^{-1}$ (hence $BB^{\dagger}=I_m$, where $I_m$ denotes the $m\times m$ identity matrix). Likewise, if $\mathrm{rank}(B)=n$, then $B^TB$ is invertible and $B^{\dagger}=(B^TB)^{-1}B^T$
(hence $B^{\dagger}B=I_n$). For this and more information on pseudoinverses, see for instance \cite{dym}*{Chapter 11}.\par
Furthermore, we denote by $\norm{\cdot}_{\infty}$ the maximum-norm on $\R^n$ resp. $\R^m$ and by $\norm{B}$ the corresponding matrix (operator) norm, which---as is well-known---can be expressed as 
\begin{equation*}
\norm{B}=\max_{i=1,\dots,m}\sum_{j=1}^n|b_{ij}|.
\end{equation*}

Finally, let us introduce a last bit of notation. For $v=(v_1,\dots,v_n)\in \R_n^+$ we define $\log(v):=(\log(v_1),\dots,\log(v_n))$ (where $\log$ denotes the natural logarithm) and for $w=(w_1,\dots,w_n)^T\in \R^n$ we define $\exp(w):=(e^{w_1},\dots,e^{w_n})$.\par

Now we are ready to formulate and prove quantitative versions of Buckingham's Theorem, where we replace the condition $F\in S(A,\beta)$ by the approximate condition $F\in S(A,\beta,\eps)$ and the condition $Ay^T=\beta^T$ by the approximate condition $\norm{Ay^T-\beta^T}_{\infty}\leq \delta$.\par 
Note that the formulations and proofs are ``purely mathematical'', avoiding any explicit reference to physical units (the paper \cite{curtis} also contains an abstract, but non-quantitative, version of Buckingham's Theorem without reference to physical units; 
the proofs we give below for Theorems \ref{thm:buckingham1} and \ref{thm:buckingham2}
were originally inspired by the notes of H. Hanche-Olsen \cite{hanche-olsen} on the classical Buckingham Theorem).

\section{Quantitative Version of Buckingham's Theorem}\label{sec:buckingham}
Again we have to distinguish the cases $\mathrm{rank}(A)=n$ and $\mathrm{rank}(A)<n$.
We start with the easier case of full column-rank.

\begin{theorem}[Quantitative version of Buckingham's Theorem, part 1]\label{thm:buckingham1}
\ \\	
Let $A$ be a real $m\times n$-matrix with $\mathrm{rank}(A)=n$ and let $\beta\in \R_m$, $\eps\geq 0$ and $F\in S(A,\beta,\eps)$. Furthermore, let $y\in \R_n$ and $\delta\geq 0$ with $\norm{Ay^T-\beta^T}_{\infty}\leq\delta$.\par
Put $C:=F(1,\dots,1)$ and $D:=(A^T)^{\dagger}$ and fix a number $K>1$.\par 
If $v=(v_1,\dots,v_n)\in \R_n^+$ is such that $K^{-1}\leq v_i\leq K$ for all $i=1,\dots,n$,
then we have
\begin{equation*}
|F(v)-Cv^y|\leq|F(v)|((1+\eps)K^{m\delta\norm{D}}-1).
\end{equation*} 
\end{theorem}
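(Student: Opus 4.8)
The plan is to derive the estimate from a single, carefully placed application of the scaling condition $F\in S(A,\beta,\eps)$, using the given point $v$ itself as the base point rather than the normalized point $(1,\dots,1)$. This choice is exactly what forces the constant to come out as $(1+\eps)K^{m\delta\norm{D}}$: expanding around $(1,\dots,1)$ instead would produce an extra factor of the form $1/(1-\eps)$, which is strictly worse (and meaningless for $\eps\ge 1$). The first step is to construct a scaling vector $\tilde c\in\R_m^+$ with $v_j\tilde c^{\alpha_j}=1$ for every $j$. Taking logarithms, this amounts to solving $A^Tu=-(\log v)^T$ for $u\in\R^m$ and setting $\tilde c:=\exp(u)$; since $\mathrm{rank}(A)=n$, the $n\times m$ matrix $A^T$ has full row rank, so the pseudoinverse identities recalled above give $A^T(A^T)^{\dagger}=I_n$. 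Hence $u:=-D(\log v)^T$ solves the system, and $\tilde c:=\exp(u)\in\R_m^+$.

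Second, I apply the inequality defining $S(A,\beta,\eps)$ with base point $v$ and scaling vector $\tilde c$. Since $v_j\tilde c^{\alpha_j}=1$ for all $j$, the inner argument of $F$ collapses to $(1,\dots,1)$, so the inequality reads $\abs{C-F(v)\tilde c^{\beta}}\le\eps\abs{F(v)}\tilde c^{\beta}$. If $F(v)=0$ this forces $C=0$, whence $Cv^y=0$ and both sides of the asserted bound vanish; so I may assume $F(v)\neq 0$. Dividing by $F(v)\tilde c^{\beta}\neq 0$ (recall $\tilde c^{\beta}>0$) gives $C/(F(v)\tilde c^{\beta})=1+\tau$ with $\abs{\tau}\le\eps$. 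Separately, from $v_j=\tilde c^{-\alpha_j}$ one computes $v^y=\tilde c^{-\gamma}$, where $\gamma\in\R_m$ is the row vector with $\gamma^T=Ay^T$; combining the two facts gives
\begin{equation*}
\frac{Cv^y}{F(v)}=(1+\tau)\,\tilde c^{\beta-\gamma}.
\end{equation*}

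The remaining analytic input is a bound on $P:=\tilde c^{\beta-\gamma}$. Writing $P=\exp\bigl(\sum_{i=1}^{m}u_i(\beta_i-\gamma_i)\bigr)$ and estimating the exponent by H\"older's inequality (pairing $\norm{\cdot}_1$ with $\norm{\cdot}_{\infty}$), I use $\norm{u}_{\infty}=\norm{D(\log v)^T}_{\infty}\le\norm{D}\norm{\log v}_{\infty}\le\norm{D}\log K$ (the last step because $K^{-1}\le v_i\le K$) together with $\norm{\beta-\gamma}_1\le m\norm{\beta-\gamma}_{\infty}=m\norm{Ay^T-\beta^T}_{\infty}\le m\delta$. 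This yields $\abs*{\sum_{i=1}^{m}u_i(\beta_i-\gamma_i)}\le m\delta\norm{D}\log K$, hence $K^{-m\delta\norm{D}}\le P\le K^{m\delta\norm{D}}=:M$, and $M\ge 1$ since $K>1$.

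Finally, with $Cv^y/F(v)=(1+\tau)P$, $\abs{\tau}\le\eps$, $P\in[M^{-1},M]$ and $M\ge 1$, the theorem reduces to the elementary inequality $\abs{(1+\tau)P-1}\le(1+\eps)M-1$, after which one simply multiplies by $\abs{F(v)}$. The upper estimate $(1+\tau)P\le(1+\eps)M$ is immediate; the lower estimate $(1+\tau)P\ge 2-(1+\eps)M$ requires a short case distinction, following from $(1-\eps)M^{-1}+(1+\eps)M\ge 2$ (a one-line monotonicity check in $M\ge 1$) when $\eps\le 1$, and from $2M\ge 2$ when $\eps>1$. I expect this last elementary lemma---and, more fundamentally, the decision to expand $F$ around $v$ rather than around $(1,\dots,1)$---to be the only genuinely delicate point; everything else is routine bookkeeping with the Moore--Penrose pseudoinverse and the matrix norm $\norm{D}$.
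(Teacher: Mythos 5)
Your proof is correct and follows essentially the same route as the paper: the same scaling vector $\tilde c=\exp(-D(\log v)^T)$ built from the pseudoinverse identity $A^T(A^T)^{\dagger}=I_n$, the same application of the $\eps$-condition at the base point $v$ so that the argument collapses to $(1,\dots,1)$, and the same bound $\tilde c^{\beta-\gamma}\in[K^{-m\delta\norm{D}},K^{m\delta\norm{D}}]$. The only difference is in the endgame and is cosmetic: the paper closes with one triangle inequality, $|F(v)-Cv^y|\leq|F(v)|\bigl(|\tilde c^{\beta-\gamma}-1|+\eps\,\tilde c^{\beta-\gamma}\bigr)$, which avoids your division by $F(v)$ (hence the separate $F(v)=0$ case) and your case distinction on whether $\eps\leq 1$.
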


Roughly speaking, the theorem states that for small enough $\eps$ and $\delta$ the quotient
$Cv^y/F(v)$ is close to one.\par 
As a particular case, we get that if $Ay^T=\beta^T$ and $F\in S(A,\beta)$, then $F(v)=Cv^y$ holds for every $v\in \R_n^⁺$, which is the content of the usual Buckingham-Theorem for $\mathrm{rank}(A)=n$.

\begin{proof}[Proof of Theorem 3.1]
Write $y=(y_1,\dots,y_n)$, $\beta=(\beta_1,\dots,\beta_m)$ and $A=(\alpha_{ij})_{i,j=1}^{m,n}$. Put $\alpha_j:=(\alpha_{1j},\dots,\alpha_{mj})$ for $j=1,\dots,n$ and $z_i:=(\alpha_{i1},\dots,\alpha_{in})$ for $i=1,\dots,m$.\par 
Let 
\begin{equation*}
(\lambda_1,\dots,\lambda_m)^T:=-D\log(v)^T.
\end{equation*}\par 
Since $A^TD=I_n$, we have
\begin{equation*}
\sum_{i=1}^m\lambda_iz_i=-\log(v).
\end{equation*}\par 
Put $c_i:=e^{\lambda_i}$ for $i=1,\dots,m$ and $c:=(c_1,\dots,c_m)$. It follows that
\begin{equation*}
\log(c^{\alpha_j})=\log\paren*{\prod_{i=1}^mc_i^{\alpha_{ij}}}=\sum_{i=1}^m\alpha_{ij}\lambda_i=-\log(v_j)
\end{equation*}
and hence $c^{\alpha_j}=1/v_j$ for $j=1,\dots,n$.\par 
Since $F\in S(A,\beta,\eps)$ this implies
\begin{equation}\label{eq:buckingham1}
|C-F(v)c^{\beta}|\leq\eps|F(v)|c^{\beta}.
\end{equation}\par 
Now let $\gamma=(\gamma_1,\dots,\gamma_m):=yA^T=(Ay^T)^T$. It follows from Lemma \ref{lemma:buckingham} that $v^yc^{\gamma}=\prod_{j=1}^n(v_jc^{\alpha_j})^{y_j}=1$.\par 
Together with \eqref{eq:buckingham1} this implies 
\begin{align}\label{eq:buckingham2}
&|F(v)-Cv^y|\leq|F(v)-F(v)c^{\beta}v^y|+|F(v)c^{\beta}v^y-Cv^y| \nonumber\\
&\leq|F(v)|(|c^{\beta-\gamma}-1|+\eps c^{\beta-\gamma}).
\end{align}\par 
Our assumption on $v$ implies $\norm{\log(v)^T}_{\infty}\leq \log(K)$ and thus we have
$\norm{(\lambda_1,\dots,\lambda_m)^T}_{\infty}=\norm{D\log(v)^T}_{\infty}\leq \norm{D}\log(K)$.\par 
But then $c_i=e^{\lambda_i}\in [K^{-\norm{D}},K^{\norm{D}}]$ holds for $i=1,\dots,m$.\par 
Since $|\beta_i-\gamma_i|\leq \norm{\beta^T-\gamma^T}_{\infty}\leq \delta$ it follows that $c_i^{\beta_i-\gamma_i}\in [K^{-\delta\norm{D}},K^{\delta\norm{D}}]$ for $i=1,\dots,m$ and thus $c^{\beta-\gamma}\in [K^{-m\delta\norm{D}},K^{m\delta\norm{D}}]$.\par
Hence 
\begin{equation*}
|c^{\beta-\gamma}-1|\leq\max\set*{K^{m\delta\norm{D}}-1,1-K^{-m\delta\norm{D}}}
=K^{m\delta\norm{D}}-1
\end{equation*}
(here we have used the inequality $x+x^{-1}\geq 2$ for $x>0$).\par
Combining this with \eqref{eq:buckingham2} we obtain 
\begin{equation*}
|F(v)-Cv^y|\leq|F(v)|((1+\eps)K^{m\delta\norm{D}}-1).
\end{equation*} 
\end{proof}

Now we turn to the second case $\mathrm{rank}(A)<n$.
 
\begin{theorem}[Quantitative version of Buckingham's Theorem, part 2]\label{thm:buckingham2}
\ \\	
Let $A$ be a real $m\times n$-matrix with $r:=\mathrm{rank}(A)<n$ and let $k:=n-r$.\par 
Let $\beta\in \R_m$, $\eps\geq 0$ and $F\in S(A,\beta,\eps)$. Also, let $y\in \R_n$ and $\delta\geq 0$ such that $\norm{Ay^T-\beta^T}_{\infty}\leq\delta$.\par 
Put $D:=(A^T)^{\dagger}$ and fix a basis $(x_1^T,\dots,x_k^T)$ of the kernel $\mathrm{ker}(A)$ of $A$. Denote by $x_{sj}$ the $j$-th coordinate of $x_s$ for $s=1,\dots,k$ and $j=1,\dots,n$ and  put $X:=(x_{sj})_{s,j=1}^{k,n}$.\par 
Define functions $\pi_s:\R_n^+ \rightarrow \R_+$ by $\pi_s(u):=u^{x_s}$ for all $u\in \R_n^+$ and $s=1,\dots,k$.\par 
Furthermore, define 
\begin{equation*}
\psi(w):=\exp(X^{\dagger}\log(w)^T) \ \  \text{and}\ \ \ G(w):=F(\psi(w))/\psi(w)^y \ \ \ \forall w\in \R_k^+.
\end{equation*}\par
Let $M:=\max\set*{|x_{sj}|:s=1,\dots,k,\,j=1,\dots,n}$ and let $K>1$.\par 
Then the following assertions hold:
\begin{enumerate}[\upshape(a)]
\item $\pi_1,\dots,\pi_k\in S(A,0)$
\item If $v=(v_1,\dots,v_n)\in \R_n^+$ such that $K^{-1}\leq v_i\leq K$ for all $i=1,\dots,n$,
then we have 
\begin{equation*}
|F(v)-G(\pi_1(v),\dots,\pi_k(v))v^y|\leq |F(v)|((1+\eps)K^{m\delta\norm{D}(nM\norm{X^{\dagger}}+1)}-1).
\end{equation*} 	
\end{enumerate}
\end{theorem}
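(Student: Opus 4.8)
The plan is to obtain part~(a) directly from Lemma~\ref{lemma:buckingham} and to prove part~(b) by re-running the proof of Theorem~\ref{thm:buckingham1} almost word for word, the only change being that the base point $(1,\dots,1)$ used there is replaced by the point $\psi(w)$, where $w:=(\pi_1(v),\dots,\pi_k(v))$. For (a), fix $s$: since $x_s^T\in\ker A$ we have $Ax_s^T=0$, the constant function $1$ trivially lies in $S(A,0)$, and Lemma~\ref{lemma:buckingham} applied with that constant function, with $x_s$ in the role of $y$ and with $\beta=0$, shows that $\pi_s\in S(A,0)$.

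For (b), take $v$ with $K^{-1}\le v_i\le K$ and set $\lambda:=-D\log(v)^T$, $c_i:=e^{\lambda_i}$, $c:=(c_1,\dots,c_m)$, exactly as in Theorem~\ref{thm:buckingham1}. The decisive first step is to verify the identity
\begin{equation*}
\psi(w)=(v_1c^{\alpha_1},\dots,v_nc^{\alpha_n}).
\end{equation*}
Comparing logarithms: from $\log\pi_s(v)=\sum_j x_{sj}\log v_j$ one gets $\log(w)^T=X\log(v)^T$, hence $(\log\psi(w))^T=X^{\dagger}X\log(v)^T$, while $(\log(v_1c^{\alpha_1},\dots,v_nc^{\alpha_n}))^T=\log(v)^T+A^T\lambda=(I_n-A^TD)\log(v)^T$. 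So the identity reduces to $X^{\dagger}X=I_n-A^TD$, and this linear-algebra fact is the one ingredient beyond what is needed in Theorem~\ref{thm:buckingham1} — the step I expect to require the most care (it is conceptually the heart of the matter, even though it is not technically hard). It holds because both sides are the orthogonal projection of $\R^n$ onto $\ker A$: the rows of $X$ form a basis of $\ker A$, so $X^{\dagger}X=X^T(XX^T)^{-1}X$ is the orthogonal projection onto $\ker A$; and $A^TD=A^T(A^T)^{\dagger}$ is the orthogonal projection onto the column space of $A^T$, i.e.\ onto $(\ker A)^{\perp}$, so $I_n-A^TD$ is again the orthogonal projection onto $\ker A$.

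Granting the identity, the remainder copies Theorem~\ref{thm:buckingham1}. From $F(\psi(w))=F(v_1c^{\alpha_1},\dots,v_nc^{\alpha_n})$ and $F\in S(A,\beta,\eps)$ we get $|F(\psi(w))-F(v)c^{\beta}|\le\eps|F(v)|c^{\beta}$; by definition $F(\psi(w))=G(w)\psi(w)^y$, and the computation in the proof of Lemma~\ref{lemma:buckingham} gives $\psi(w)^y=\prod_{j=1}^n(v_jc^{\alpha_j})^{y_j}=v^yc^{\gamma}$ with $\gamma:=yA^T$, so that $G(w)v^y=F(\psi(w))c^{-\gamma}$. The triangle-inequality step that produced \eqref{eq:buckingham2} then gives $|F(v)-G(w)v^y|\le|F(v)|(|c^{\beta-\gamma}-1|+\eps c^{\beta-\gamma})$, and $c^{\beta-\gamma}$ is bounded exactly as there: $\norm{\lambda}_{\infty}\le\norm{D}\log K$ gives $c_i\in[K^{-\norm{D}},K^{\norm{D}}]$, and $|\beta_i-\gamma_i|\le\norm{Ay^T-\beta^T}_{\infty}\le\delta$ gives $c^{\beta-\gamma}\in[K^{-m\delta\norm{D}},K^{m\delta\norm{D}}]$, hence $|c^{\beta-\gamma}-1|\le K^{m\delta\norm{D}}-1$. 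This yields $|F(v)-G(w)v^y|\le|F(v)|((1+\eps)K^{m\delta\norm{D}}-1)$, which is even a touch stronger than the asserted bound since $nM\norm{X^{\dagger}}+1\ge 1$ and $K>1$; the weaker stated form is what one gets if, instead of exploiting the identity $D(I_n-X^{\dagger}X)=D$, one estimates $\lambda$ more crudely through $\norm{I_n-X^{\dagger}X}\le 1+nM\norm{X^{\dagger}}$.
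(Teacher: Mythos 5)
Your proposal is correct, and both parts are sound: (a) is exactly the paper's argument, and (b) follows the same underlying strategy (produce a scaling vector $c$ with $c^{\alpha_j}=\psi(w)_j/v_j$ and then run the Theorem~\ref{thm:buckingham1} estimate), but you execute it differently and, as a result, prove a strictly sharper bound. The paper sets $u:=\psi(w)$, shows $\log(u/v)^T\in\ker(X)=\mathrm{ran}(A^T)$, defines $\lambda:=D\log(u/v)^T$, and then estimates $\norm{\lambda}_\infty$ by bounding $u$ and $v$ separately, which is where the factor $nM\norm{X^{\dagger}}+1$ enters. You instead take $\lambda:=-D\log(v)^T$ verbatim from Theorem~\ref{thm:buckingham1} and verify $\psi(w)=(v_1c^{\alpha_1},\dots,v_nc^{\alpha_n})$ via the projection identity $X^{\dagger}X=I_n-A^TD$ (both sides being the orthogonal projection onto $\ker A$ --- this is correct, and is indeed the one genuinely new ingredient). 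In fact the two constructions produce the \emph{same} $c$: the paper's $\lambda$ equals $D(X^{\dagger}X-I_n)\log(v)^T=-DA^TD\log(v)^T=-D\log(v)^T$ by the Penrose identity $DA^TD=D$, so your direct estimate $\norm{\lambda}_\infty\leq\norm{D}\log K$ is legitimate and shows that the exponent in the theorem can be improved from $m\delta\norm{D}(nM\norm{X^{\dagger}}+1)$ to $m\delta\norm{D}$, matching Theorem~\ref{thm:buckingham1} and removing the dependence on the choice of basis $X$. Since your bound implies the stated one ($nM\norm{X^{\dagger}}+1\geq 1$ and $K>1$), the theorem as stated is fully proved; what your route buys is a cleaner constant, at the cost of invoking the slightly less elementary fact about orthogonal projections, which the paper circumvents with the kernel/range dimension count.
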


Statement (a) expresses the fact that the quantities associated to $\pi_1,\dots,\pi_k$ are dimensionless, while statement (b) can roughly be interpreted as ``the quotient $G(\pi_1(v),\dots,\pi_k(v))v^y/F(v)$ is close to one if $\eps$ and $\delta$ are small.''\par
In particular, if $Ay^T=\beta^T$ and $F\in S(A,\beta)$, then we obtain $F(v)=G(\pi_1(v),\dots,\pi_k(v))v^y$ for all $v\in \R_n^+$, which is the usual Buckingham-Theorem.

\begin{proof}[Proof of Theorem 3.2]
Let $y=(y_1,\dots,y_n)$, $\beta=(\beta_1,\dots,\beta_m)$ and $A=(\alpha_{ij})_{i,j=1}^{m,n}$. Put $\alpha_j:=(\alpha_{1j},\dots,\alpha_{mj})$ for $j=1,\dots,n$.\par 
(a) Since $Ax_s^T=0$ it follows from Lemma \ref{lemma:buckingham} that $\pi_s\in S(A,0)$ for
$s=1,\dots,k$.\par
(b) First we make the following observation: if $w=(w_1,\dots,w_k)\in \R_k^+$ and $u=(u_1,\dots,u_n)\in \R_n^+$, then we have 
\begin{equation}\label{eq:buckingham3}
(\pi_1(u),\dots,\pi_k(u))=w \ \ \Leftrightarrow \ \ X\log(u)^T=\log(w)^T.
\end{equation}\par 
This can be seen as follows:
\begin{align*}
&(\pi_1(u),\dots,\pi_k(u))=w \ \ \ \Leftrightarrow \ \ \ w_s=\prod_{j=1}^nu_j^{x_{sj}} \ \ \forall s=1,\dots,k\\
&\Leftrightarrow \ \ \ \log(w_s)=\sum_{j=1}^nx_{sj}\log(u_j) \ \ \forall s=1,\dots,k \ \ \ \Leftrightarrow \ \ \ X\log(u)^T=\log(w)^T.
\end{align*}\par
Now we fix $v=(v_1,\dots,v_n)\in \R_n^+$ such that $K^{-1}\leq v_i\leq K$ for all $i=1,\dots,n$ and put $w=(w_1,\dots,w_k):=(\pi_1(v),\dots,\pi_k(v))$ and $u=(u_1,\dots,u_n):=\psi(w)$.\par 
Then we have $X\log(u)^T=XX^{\dagger}\log(w)^T=\log(w)^T$ (note that $\mathrm{rank}(X)=k$, thus 
$XX^{\dagger}=I_k$). Hence by \eqref{eq:buckingham3} we have $(\pi_1(u),\dots,\pi_k(u))=w=(\pi_1(v),\dots,\pi_k(v))$.\par 
Another application of \eqref{eq:buckingham3} gives $\log(u/v)^T=\log(u)^T-\log(v)^T\in \mathrm{ker}(X)$.\par 
Denote by $z_i$ the $i$-th row of $A$. Then, since $x_s^T\in \mathrm{ker}(A)$, we have $(Xz_i^T)_s=\sum_{j=1}^nx_{sj}\alpha_{ij}=(Ax_s^T)_i=0$ for all $i=1,\dots,m$ and $s=1,\dots,k$. Thus $U:=\mathrm{span}\set*{z_1^T,\dots,z_m^T}\ssq \mathrm{ker}(X)$.\par 
But we have $\mathrm{dim}(U)=\mathrm{rank}(A)=n-k=\mathrm{dim}(\mathrm{ker}(X))$ and hence $U=\mathrm{ker}(X)$.\par 
Now let 
\begin{equation*}
(\lambda_1,\dots,\lambda_m)^T:=D\log(u/v)^T.
\end{equation*}\par 
Since $\log(u/v)^T\in \mathrm{ker}(X)=U=\mathrm{ran}(A^T)$ and $A^TDA^T=A^T$ it follows that
\begin{equation*}
\sum_{i=1}^m\lambda_iz_i=\log(u/v).
\end{equation*}\par
We put $c_i:=e^{\lambda_i}$ for $i=1,\dots,m$ and $c:=(c_1,\dots,c_m)$. Then, similar to the proof of Theorem \ref{thm:buckingham1}, it follows that $c^{\alpha_j}=u_j/v_j$ for $j=1,\dots,n$.\par 
Let $\gamma=(\gamma_1,\dots,\gamma_m):=yA^T=(Ay^T)^T$. It follows from Lemma \ref{lemma:buckingham} that 
$(v/u)^yc^{\gamma}=\prod_{j=1}^n(c^{\alpha_j}v_j/u_j)^{y_j}=1$.\par
Because of $F\in S(A,\beta,\eps)$ we have
\begin{equation*}
|F(v)c^{\beta}-F(u)|\leq\eps|F(v)|c^{\beta}.
\end{equation*}
and so it follows that
\begin{align}\label{eq:buckingham4}
&|F(v)-G(\pi_1(v),\dots,\pi_k(v))v^y|=|F(v)-G(w)v^y|=|F(v)-v^yF(u)/u^y| \nonumber\\
&=|F(v)-F(u)c^{-\gamma}|\leq|F(v)c^{\beta-\gamma}-F(u)c^{-\gamma}|+|F(v)c^{\beta-\gamma}-F(v)| \nonumber\\
&\leq|F(v)|(\eps c^{\beta-\gamma}+|c^{\beta-\gamma}-1|).
\end{align}\par 
Since $K^{-1}\leq v_j\leq K$ for $j=1,\dots,n$ we have
\begin{equation*}
|\log(v^{x_s})|\leq \sum_{j=1}^n|x_{sj}||\log(v_j)|\leq nM\log(K) \ \ \ \forall s=1,\dots,k.
\end{equation*}\par 
It follows that $\norm{X^{\dagger}(\log(v^{x_1}),\dots,\log(v^{x_k}))^T}_{\infty}\leq nM\norm{X^{\dagger}}\log(K)$.\par  
Now $u=\psi(w)=\psi(v^{x_1},\dots,v^{x_k})=\exp(X^{\dagger}(\log(v^{x_1}),\dots,\log(v^{x_k}))^T)$
implies $K^{-nM\norm{X^{\dagger}}}\leq u_j\leq K^{nM\norm{X^{\dagger}}}$ for $j=1,\dots,n$.\par 
Together with our assumption on $v$ this implies
\begin{equation*}
K^{-nM\norm{X^{\dagger}}-1}\leq \frac{u_j}{v_j}\leq K^{nM\norm{X^{\dagger}}+1} \ \ \ \forall j=1,\dots,n.
\end{equation*}\par
But then $|\lambda_i|\leq \norm{D}\norm{\log(u/v)^T}_{\infty}\leq \norm{D}(nM\norm{X^{\dagger}}+1)\log(K)$ for $i=1,\dots,m$.\par
Since $c_i=e^{\lambda_i}$ we obtain $K^{-\norm{D}(nM\norm{X^{\dagger}}+1)}\leq c_i\leq K^{\norm{D}(nM\norm{X^{\dagger}}+1)}$ for $i=1,\dots,m$.\par
Because of $|\beta_i-\gamma_i|\leq \norm{\beta^T-\gamma^T}_{\infty}\leq\delta$ it follows that $K^{-\delta\norm{D}(nM\norm{X^{\dagger}}+1)}\leq c_i^{\beta_i-\gamma_i}\leq K^{\delta\norm{D}(nM\norm{X^{\dagger}}+1)}$ for $i=1,\dots,m$ and hence 
\begin{equation*}
K^{-m\delta\norm{D}(nM\norm{X^{\dagger}}+1)}\leq c^{\beta-\gamma}\leq K^{m\delta\norm{D}(nM\norm{X^{\dagger}}+1)}.
\end{equation*}\par 
As in the proof of Theorem \ref{thm:buckingham1} this implies $|c^{\beta-\gamma}-1|\leq K^{m\delta\norm{D}(nM\norm{X^{\dagger}}+1)}-1$ and together with \eqref{eq:buckingham4} we obtain 
\begin{equation*}
|F(v)-G(\pi_1(v),\dots,\pi_k(v))v^y|\leq |F(v)|((1+\eps)K^{m\delta\norm{D}(nM\norm{X^{\dagger}}+1)}-1).
\end{equation*} 
\end{proof}

Let us end this paper with a few remarks concerning Theorem \ref{thm:buckingham2}.\par 
\ \\
1) In the special case $\eps=0=\delta$ (the case of the classical Buckingham Theorem)
we have the following uniqueness assertion: the function $G$ defined in Theorem \ref{thm:buckingham2} is the only function on $\R_k^+$ 
satisfying $F(v)=G(\pi_1(v),\dots,\pi_k(v))v^y$ for every $v\in \R_n^+$.\par 
To see this suppose $H:\R_k^+ \rightarrow \R$ is another function such that 
$F(v)=H(\pi_1(v),\dots,\pi_k(v))v^y$ holds for every $v\in \R_n^+$. Fix an arbitrary $w\in \R_k^+$ and put $v:=\psi(w)$. 
Then $X\log(v)^T=\log(w)^T$ and hence $(\pi_1(v),\dots,\pi_k(v))=w$ (see \eqref{eq:buckingham3}). It follows that $H(w)=F(v)/v^y=G(w)$.\par 
\ \\
2) If $Ay^T=\beta^T$ and $F$ is of the form $F(v)=H(\pi_1(v),\dots,\pi_k(v))v^y$ for every $v\in \R_n^+$, where $H:\R_k^+ \rightarrow \R$ 
is an arbitrary function, then we automatically have $F\in S(A,\beta)$ (this follows immediately from Lemma \ref{lemma:buckingham}, 
since $\pi_1,\dots,\pi_k\in S(A,0)$).\par 
\ \\
3) It is worth mentioning that the function $G$ defined in Theorem \ref{thm:buckingham2} is ``as good as $F$ regarding regularity properties'', more precisely:
\begin{enumerate}[(i)]
\item If $F$ is measurable, then $G$ is measurable.
\item If $F$ is continuous, then $G$ is continuous.
\item If $F$ is differentiable, then $G$ is differentiable.
\end{enumerate}
(This follows immediately from the usual rules.)\\
\ \\
4) For every real $m\times n$-matrix $A\neq 0$, every $\eps>0$ and every $\beta^T$ in the range of $A$, we have that $S(A,\beta)$ is a proper subset of $S(A,\beta,\eps)$.

\begin{proof}
Let $y\in \R_n$ such that $Ay^T=\beta^T$ and fix $\tau\in (0,1)$ such that
\begin{equation*}
\frac{1+\tau}{1-\tau}\leq 1+\eps \ \ \ \text{and}\ \ \ \frac{1-\tau}{1+\tau}\geq 1-\eps.
\end{equation*}
\indent Now we first suppose that $H:\R_n^+ \rightarrow [1-\tau,1+\tau]$ is an arbitrary function and 
we put $F(v):=H(v)v^y$ for every $v\in \R_n^+$. Then $F\in S(A,\beta,\eps)$.\par 
To see this we denote the columns of $A$ by $\alpha_1^T,\dots,\alpha_n^T$ and let $v=(v_1,\dots,v_n)\in \R_n^+$ and $c\in \R_m^+$. Then we have, according to Lemma \ref{lemma:buckingham}, 
$\prod_{j=1}^n(v_jc^{\alpha_j})^{y_j}=v^yc^{\beta}$ and hence
\begin{align*}
&|F(v_1c^{\alpha_1},\dots,v_nc^{\alpha_n})-F(v_1,\dots,v_n)c^{\beta}|\\
&=|H(v_1c^{\alpha_1},\dots,v_nc^{\alpha_n})-H(v_1,\dots,v_n)|v^yc^{\beta}\\
&\leq v^yc^{\beta}\eps|H(v_1,\dots,v_n)|=\eps|F(v_1,\dots,v_n)|c^{\beta},
\end{align*}
where the ``$\leq$'' holds because of
\begin{equation*}
\abs*{1-\frac{H(v_1c^{\alpha_1},\dots,v_nc^{\alpha_n})}{H(v_1,\dots,v_n)}}\leq
\max\set*{\frac{1+\tau}{1-\tau}-1,1-\frac{1-\tau}{1+\tau}}\leq\eps.
\end{equation*}
\indent Now we define a specific function $H:\R_n^+ \rightarrow [1-\tau,1+\tau]$ by $H(1,\dots,1):=1-\tau$ and $H(v):=1+\tau$ for every $v\in \R_n^+\sm \set*{(1,\dots,1)}$.\par 
The corresponding $F$ is in $S(A,\beta,\eps)$ and we claim that $F\not\in S(A,\beta)$.\par 
To see this we distinguish two cases. If $\mathrm{rank}(A)=n$, then $F\in S(A,\beta)$ would
imply that $H$ is constant (Theorem \ref{thm:buckingham1}), which is not true.\par
If $\mathrm{rank}(A)<n$ and $F\in S(A,\beta)$, then, in the notation of Theorem \ref{thm:buckingham2}, we would have $H(v)=G(\pi_1(v),\dots,\pi_k(v))$ for every $v\in \R_n^+$.\par 
Since $\pi_1,\dots,\pi_k\in S(A,0)$ this would mean $H\in S(A,0)$. But $A\neq 0$, so we can always find $c\in \R_m^+$ such $(c^{\alpha_1},\dots,c^{\alpha_n})\neq (1,\dots,1)$ and hence 
$H(c^{\alpha_1},\dots,c^{\alpha_n})\neq H(1,\dots,1)$, thus $H\not\in S(A,0)$.\par 
This completes the proof.
\end{proof}

\newpage

\section{Appendix}\label{sec:appendix}
\ \\
{	
	\renewcommand{\arraystretch}{1.3}
\begin{center}
{\bf Table 2: Examples of derived quantities and their  SI units}\\
\ \\
\begin{tabular}{|c|c|}
\hline
{\bf Quantity} \ \ & {\bf Unit} \\ \hline
area \ \ & $\si{m^2}$ \\ \hline
volume \ \ & $\si{m^3}$ \\ \hline
density \ \ & $\si{m^{-3}.kg}$ \\ \hline
velocity \ \ & $\si{m.s^{-1}}$ \\ \hline
acceleration \ \ & $\si{m.s^{-2}}$ \\ \hline
momentum \ \ & $\si{m.kg.s^{-1}}$ \\ \hline 
angular momentum \ \ & $\si{m^2.kg.s^{-1}}$ \\ \hline
moment of inertia \ \ & $\si{m^2.kg}$ \\ \hline
force \ \ & $\si{m.kg.s^{-2}}=\si{N}$ (newton) \\ \hline 
energy \ \ & $\si{m^2.kg.s^{-2}}=\si{J}$ (joule) \\ \hline
frequency \ \ & $\si{s^{-1}}=\si{Hz}$ (hertz) \\ \hline 
pressure \ \ & $\si{m^{-1}.kg.s^{-2}}=\si{N/m^2}=\si{Pa}$ (pascal) \\ \hline
power \ \ & $\si{m^2.kg.s^{-3}}=\si{J/s}=\si{W}$ (watt) \\ \hline
electric charge \ \ & $\si{s.A}=\si{C}$ (coulomb) \\ \hline 
electric voltage \ \ & $\si{m^2.kg.s^{-3}.A^{-1}}=\si{J/C}=\si{V}$ (volt) \\ \hline
electrical resistance \ \ & $\si{m^2.kg.s^{-3}.A^{-2}}=\si{V/A}=\si{\Omega}$ (ohm) \\ \hline
electrical conductance \ \ & $\si{m^{-2}.kg^{-1}.s^3A^2}=\si{\Omega^{-1}}=\si{S}$ (siemens) \\ \hline
capacitance \ \ & $\si{m^{-2}.kg^{-1}.s^4.A^2}=\si{C/V}=\si{F}$ (farad) \\ \hline 
magnetic flux \ \ & $\si{m^2.kg.s^{-2}.A^{-1}}=\si{Wb}$ (weber) \\ \hline
magnetic flux density \ \ & $\si{kg.s^{-2}.A^{-1}}=\si{Wb/m^2}=\si{T}$ (tesla) \\ \hline
inductance \ \ & $\si{m^2.kg.s^{-2}.A^{-2}}=\si{Wb/A}=\si{H}$ (henry) \\ \hline
entropy \ \ & $\si{m^2.kg.s^{-2}.K^{-1}}=\si{J/K}$ \\ \hline
radioactivity \ \ & $\si{s^{-1}}=\si{Bq}$ (becquerel) \\ \hline
ionizing radiation dose \ \ & $\si{m^2.s^{-2}}=\si{J/kg}=\si{Gy}$ (gray) \\ \hline
equivalent dose \ \ & $\si{m^2.s^{-2}}=\si{J/kg}=\si{Sv}$ (sievert) \\ \hline
catalytic activity \ \ & $\si{s^{-1}.mol}=\si{kat}$ (katal) \\ \hline
\end{tabular}
\end{center}
	}

\newpage

{	
	\renewcommand{\arraystretch}{1.3}
\begin{center}
{\bf Table 3: SI prefixes}	
\end{center}

\begin{tabular}{|c|c|}
\hline
{\bf Prefix} \ \ & {\bf Scaling factor} \\ \hline
yotta (\si{Y}) \ \ & $10^{24}$ \\ \hline
zetta (\si{Z}) \ \ & $10^{21}$ \\ \hline
exa (\si{E}) \ \ & $10^{18}$ \\ \hline
peta (\si{P}) \ \ & $10^{15}$ \\ \hline
tera (\si{T}) \ \ & $10^{12}$ \\ \hline
giga (\si{G}) \ \ & $10^9$ \\ \hline
mega (\si{M}) \ \ & $10^6$ \\ \hline
kilo (\si{k}) \ \ & $10^3$ \\ \hline
hecto (\si{h}) \ \ & $10^2$ \\ \hline
deca (\si{da}) \ \ & $10$ \\ \hline
\end{tabular}
\ \ \ \ \ \ \ \ 
\begin{tabular}{|c|c|}
\hline
{\bf Prefix} \ \ & {\bf Scaling factor} \\ \hline
deci (\si{d}) \ \ & $10^{-1}$ \\ \hline
centi (\si{c}) \ \ & $10^{-2}$ \\ \hline
milli (\si{m}) \ \ & $10^{-3}$ \\ \hline
micro ($\si{\mu}$) \ \ & $10^{-6}$ \\ \hline
nano (\si{n}) \ \ & $10^{-9}$ \\ \hline
pico (\si{p}) \ \ & $10^{-12}$ \\ \hline
femto (\si{f}) \ \ & $10^{-15}$ \\  \hline
atto (\si{a}) \ \ & $10^{-18}$ \\ \hline
zepto (\si{z}) \ \ & $10^{-21}$ \\ \hline
yocto (\si{y}) \ \ & $10^{-24}$ \\ \hline
\end{tabular}
	
}

{	
	\renewcommand{\arraystretch}{1.3}
\begin{center}
{\bf Table 4: Examples of non-SI units}\\
\ \\
\begin{savenotes}
\begin{tabular}{|c|c|c|}
\hline
{\bf Quantity} \ \ & {\bf Unit} \ \ & {\bf Value\footnote{Factors are rounded to five decimal digits.}} \\ \hline
\multirow{4}{*}{time}
& minute (\si{min}) \ \ & $60\,\si{s}$ \\ \cline{2-3}
& hour (\si{h}) \ \ & $60\,\si{min}=3600\,\si{s}$ \\ \cline{2-3}
& day (\si{d}) \ \ & $24\,\si{h}=86400\,\si{s}$ \\ \cline{2-3}
& year (\si{a}) \ \ & $365.25\,\si{d}$ \\ \hline
\multirow{4}{*}{length}
& \r{a}ngstr\"om (\si{\r{A}}) \ \ & $10^{-10}\,\si{m}$ \\ \cline{2-3}
& astronomical unit (\si{au}) \ \ & $1.49598\cdot 10^{11}\,\si{m}$ \\ \cline{2-3}
& light-year (\si{ly}) \ \ & $9.46073\cdot10^{15}\,\si{m}$ \\ \cline{2-3}
& parsec (\si{pc}) \ \ & $3.08568\cdot10^{16}\,\si{m}\approx 3.26\,\si{ly}$ \\ \hline
\multirow{2}{*}{mass}
& unified atomic mass unit (\si{u}) \ \ & $1.66054\cdot10^{-27}\,\si{kg}$ \\ \cline{2-3} 
& tonne (\si{t}) \ \ & $10^3\,\si{kg}$ \\ \hline
\multirow{2}{*}{area}
& barn (\si{b}) \ \ & $10^{-28}\,\si{m^2}$ \\ \cline{2-3}
& hectare (\si{ha}) \ \ & $10^4\,\si{m^2}$ \\ \hline
volume \ \ & liter (\si{l}) \ \ & $10^{-3}\,\si{m^3}$ \\ \hline
\multirow{2}{*}{energy}
& electronvolt (\si{eV}) \ \ & $1.60218\cdot10^{-19}\,\si{J}$ \\ \cline{2-3}
& calorie (\si{cal}) \ \ & $4.184\,\si{J}$ \\ \hline
\multirow{2}{*}{pressure}
& bar (\si{bar}) \ \ & $10^5\,\si{Pa}$ \\ \cline{2-3}
& standard atmosphere (\si{atm}) \ \ & $101325\,\si{Pa}$ \\ \hline
\end{tabular}
\end{savenotes}
\end{center}
    }

\begin{bibdiv}
\begin{biblist}

\bib{boyling}{article}{
    title={A short proof of the Pi Theorem of dimensional analysis},
    author={Boyling, J. B.},
    journal={J. Appl. Math. Phys.},
    volume={30},
    date={1979},
    pages={531--533}
    }

\bib{buckingham}{article}{
    title={On physically similar systems; Illustrations of the use of dimensional equations},
    author={Buckingham, E.},
    journal={Phys. Review},
    volume={4},
    number={4},
    date={1914},
    pages={345--376}
    }
    
\bib{curtis}{article}{
    title={Dimensional analysis and the Pi-Theorem},
    author={Curtis, W. D.},
    author={David Logan, J.},
    author={Parker, W. A.},
    journal={Linear Algebra Appl.},
    volume={47},
    date={1982},
    pages={117--126}
    }    

\bib{dym}{book}{
    title={Linear algebra in action},
    author={Dym, H.},
    publisher={American Mathematical Society},
    address={Providence},
    series={Graduate Studies in Mathematics},
    volume={78},
    edition={2},
    date={2013}
    }
    
\bib{federman}{article}{
    title={On some general methods of integration of first-order partial differential equations},
    author={Federman, A.},
    journal={Proc. St. Petersburg Polytechnic Institute, Section of Technics, Natural Science and Mathematics},
    volume={16},
    number={1},
    date={1911},
    pages={97-155},
    language={Russian}
    }    
    
\bib{hanche-olsen}{article}{
  title={Buckingham's pi-theorem},
  author={Hanche-Olsen, H.},
  date={2004},
  pages={7 pages},
  note={available under http://folk.ntnu.no/hanche/notes/buckingham/buckingham-a4.pdf},
  }    
  
\bib{riabouchinsky}{article}{
    title={M\'ethode des variables de dimension z\'ero et son application en a\'erodynamique},
    author={Riabouchinsky, D.},
    journal={L'A\'erophile},
    date={1911},
    pages={407--408},
    language={French}
    }  
    
\bib{vaschy}{article}{
    title={Sur les lois de similitude en physique},
    author={Vaschy, A.},
    journal={Annales T\'el\'egraphiques},
    volume={19},
    date={1892},
    pages={25--28},
    language={French}
    }    

\end{biblist}
\end{bibdiv}

\ \\
\address
\email 

\end{document}